\DeclareMathOperator*{\cost}{cost}
\DeclareMathOperator*{\val}{val}
\DeclareMathOperator*{\brb}{brb}
\DeclareMathOperator*{\fee}{fee}
\newtheorem{definition}{Definition}
\newtheorem{prop}{Proposition}
\begin{document}

\title{Hashcashed Reputation\\  
\huge{with Application in Designing Watchtowers}}

\author{\IEEEauthorblockN{Sonbol Rahimpour}
\IEEEauthorblockA{\textit{Electrical and Computer Engineering} \\
\textit{University of Alberta}\\
Edmonton, Canada \\
rahimpou@ualberta.ca}
\and
\IEEEauthorblockN{Majid Khabbazian}
\IEEEauthorblockA{\textit{Electrical and Computer Engineering} \\
\textit{University of Alberta}\\
Edmonton, Canada \\
mkhabbazian@ualberta.ca}
}

\maketitle

\begin{abstract}
  We propose a novel reputation system to stimulate
  well-behaviour, and competition in online markets.
  Our reputation system is suited for markets where a publicly-verifiable ``proof-of-misbehaviour'' can be generated when one party misbehaves.
  Such markets include those that provide blockchain services, such as monitoring services by watchtowers.
  Watchtowers are entities that watch the blockchain on behalf of their offline clients to protect the clients' interests in applications such as payment networks (e.g., the Lightning network).
  In practice, there is no trust between clients and watchtowers, 
  and it is challenging to incentivize watchtowers to well-behave (e.g., to refuse bribery). 
  To showcase our reputation system, in this work, we create an open market of watchtowers, where watchtowers are motivated to not only deliver their promised service but also reduce their service fees in competition with each other. 
\end{abstract}

\begin{IEEEkeywords}
Reputation system, Lightning network, Watchtowers
\end{IEEEkeywords}

\section{Introduction}
Hashcash is a cryptographic hash-based proof-of-work algorithm proposed by Adam Back~\cite{back1997partial} to limit email spamming.
Over time hashcash has found other applications in mitigating denial of service attacks~\cite{mankins2001mitigating,dwork1992pricing}.
Today, it is perhaps most known for its role in the consensus algorithms of Bitcoin and other cryptocurrencies.
In this work, we introduce yet another application of hashcash: reputation system.

Reputation systems play a profound role in online markets, where members have no prior real-world interactions with each other. 
An effective reputation system enhances the honesty of the members, 
and incentivizes them to follow higher standards to maximize their profit. 
Nowadays, there are many different reputation systems, each used for a specific target market. For example, a common reputation system is the recommendation system, used by online shopping stores such as Amazon and eBay. 
In this system, buyers rate products or services they receive; in some cases, sellers can rate buyers, too.
For example, on eBay, each party can assign a positive or negative rate to its counter-party.
The cumulative rating of each party is then used to create a public reputation for a member or a product.

In this work, we introduce a new reputation system, which targets markets that provide blockchain services such as monitoring services.
These markets can often be designed in such a way that a party can generate a \emph{publicly verifiable} ``proof-of-breach'' if 
the service provider does not fulfill their contractual obligations.
As an application, we use our reputation system to design an open watchtower market for the Bitcoin Lightning network.

Watchtower is a service to protect users who participate in payment channels~\cite{watchtowerLN}.
Payment channels, such as Lightning channels~\cite{poon2016Bitcoin}, enable parties to perform so-called off-chain transactions (outside the blockchain) through private communications.
Payment channels guarantee the security of off-chain transactions using allocated collateral.
To maintain the security of the payment channel, however, 
 parties need to be frequently online and watch the blockchain for possible frauds by the counter-parties;
a party that goes offline risks losing payments as the counter-party can close the channel using an outdated channel state.
A party who decides to go offline can employ a third-party, referred to as watchtower, 
to watch the blockchain and protect the channel on the party's behalf.

Incentivizing watchtowers is a non-trivial task. 
One approach to incentivize watchtowers is to pay them at the beginning of the watching service we expect them to provide in the future.
The challenge with this approach is to make sure that watchtowers will, in fact, fulfill the service for which they were paid.
There are known solutions~\cite{pisa,sprites} that can handle this using complex smart contracts. 
Such solutions are, however, not applicable to Bitcoin’s simple script language.
In addition, they generate extra load on the blockchain, which is what payment channels try to avoid in the first place.
Another approach is to pay watchtowers a fee only when they detect a fraud~\cite{avarikioti2018towards,unlinkdryja,watchtowerLN}.
A major issue with this approach is that watchtowers will not receive any fee if (thanks to them) no fraud occurs.
In fact, this approach may incentivize watchtowers to encourage frauds. 

In this work, we show how our reputation system can be used to incentivize watchtowers.
At a high level, our solution works as follows.
Consider an open market, where any watchtower can join to provide watching services for profit.
Each watchtower has a reputation, which is basically a proof-of-work tied to its ID. The ID of a watchtower is essentially a public key selected by the watchtower itself.
A client (i.e., a payment channel user) first selects a watchtower. 
The selection criteria are designed in such a way that a watchtower with a higher reputation gets a higher chance of being selected.
This incentivizes watchtowers to progressively improve their proof-of-work to gain a higher share of the market.
After selecting a watchtower, the client communicates the terms of service with the watchtower.
Accordingly, the watchtower creates a contract\footnote{This is not a smart contract. It is rather a stand-alone digital document that is exchanged between a client and a watchtower through private communications.} (a stand-alone digital document) and signs it.
The contract is designed in such a way that 1) it facilitates its exchange with money, 2) it can be used by the client to generate a publicly verifiable
proof-of-breach if the watchtower does not fulfill its terms. 
Such a contract incentivizes a watchtower to perform its service as otherwise, it can lose its reputation, hence its share of the market.
Of course, a defaulting watchtower can start over by creating a new ID and a new proof-of-work reputation.
However, this will be very costly in a competitive market,
where watchtowers progressively strengthen their proof-of-work to increase or merely maintain their share of the market.

\section{Background}
\subsection{Hashcash}
Hashcash is a type of proof-of-work that is based on cryptographic hash functions such as SHA-256.
Let $H$ be such a cryptographic hash function.
For a given string $s$, we say $h$ is an $n$-bit hashcash of $s$, if $H(s || h)$ in binary has $n$ leading zeros, 
where $s || h$ indicates the concatenation of $s$ and $h$.
One can easily verify a hashcash $h$ by computing 
$H(s || h)$, and then counting the number of leading zeros of the result.
To find an $n$-bit hashcash, however, one requires to compute the hash function $2^n$ times, on average. 
Therefore, finding an $(n+1)$-bit hashcash requires, on average, twice as much work 
as finding an $n$-bit hashcash.
The value of $n$ is used as the measure of the amount of work performed.

\subsection{Payment channels}   
    Bitcoin can process a small number (about 7) of transactions per second \cite{onscaling}.
    In addition, it exhibits high transaction latency; a transaction needs at least ten minutes, on average, to go through.
    These are major impediments to further adoption of Bitcoin. 
    
    Payment channels~\cite{decker2015fast,poon2016Bitcoin} are a promising solution to the low throughput and high latency of Bitcoin. 
    Payment channels achieve this by handling most transactions outside the Bitcoin blockchain.
    A payment channel can be viewed as a temporary joint account between two parties, say Alice and Bob.  
    The channel is opened by a Bitcoin transaction, referred to as the opening transaction ($topen$).
    The opening transaction commits the parties UTXOs (Unspent Transaction Outputs) into a single 2-of-2 multisig output,
    controlled jointly by Alice and Bob.
    Once the channel is opened, Alice and Bob can exchange private transactions off the chain.
    These off-chain transactions, called commitment transactions ($ctx$), commit the output of $topen$ into a set
    of outputs that divide the channel fund between Alice and Bob.
    
    For example, suppose Alice and Bob each deposit two Bitcoins to open a payment channel.
    The first commitment transaction, $ctx_1$, divides the total channel fund of four Bitcoins evenly between Alice and Bob.
    This commitment transaction ensures that each party can get their money back in case the counter-party disappears
    after the channel is opened. 
    Now, suppose that Bob wishes to purchase a good worth of one Bitcoin from Alice.
    To make the payment, Bob provides Alice with a new commitment transaction, $ctx_2$, which 
    updates the channel balance by giving Alice three Bitcoins and giving Bob one Bitcoin.
    In addition to this, Bob must revoke $ctx_1$, as this transaction now reflects an outdated balance. 
    To this end, Bob will give Alice a so called justice transaction $jtx_1$.
    The justice transaction is used by Alice to penalize Bob if he publishes $ctx_1$ in the blockchain.
    To enable this punishment mechanism, 
    commitment transaction is designed such that once published by one party,
    they give the counter-party a dispute period during which the counter-party can send a justice transaction if there is any.
    In our example, if Bob publishes $ctx_1$,
    Alice can dispute it using $jtx_1$ and collect the whole channel fund.

 \subsection{Watchtowers}
  \label{sec:watchtower}
   The punishment mechanism explained earlier requires each party to stay online and monitor the blockchain for
   possible cheating by the counter-party.
   Alternatively, a party may delegate the task of monitoring the blockchain to a third party called watchtower.
   In practice, this is accomplished by giving the watchtower the first 16 bytes of every ctx's transaction ID ($ctx_{txid}$),
   as well as every justice transaction encrypted using the second 16 bytes of the corresponding $ctx_{txid}$.
   If the watchtower finds a transaction on the blockchain with an ID whose 16-byte prefix matches a prefix, it has stored,
   it will decrypt the corresponding $jtx$ transaction using the second 16 bytes of the transaction ID, 
   and then broadcasts $jtx$ to the network to penalize the cheating party.
   Note that a watchtower cannot identify a channel in this design unless one of the two channel's owners cheats.

   There is a special case where one of the two parties does not need to watch the blockchain, hence does not require a watchtower when the party goes offline.
   This case is when Alice opens a payment channel with Bob and uses this channel only to pay Bob.
   In other words, Alice does not receive/accept any payment from Bob on the channel.
   We call such a channel a \emph{directional payment channel} from Alice to Bob.
   For a directional payment channel from Alice to Bob, Alice does not need to watch the blockchain for old commitment transactions.
   It is because Bob does not have any incentive to cheat, as old transactions give Bob less money than the latest commitment transaction.
   Note that unlike Alice, Bob needs to watch the blockchain as Alice has an incentive to cheat by claiming an old commitment transaction.

\subsection{The Lightning network}     
  
  Two parties may not have a payment channel between themselves, but they may be connected through multiple payment channels. 
  For example, Alice may not have a payment channel with Bob, but she may have a payment channel with Charlie, who has a payment channel with Bob.
  In this case, the Lightning network enables Alice to transfer money to Bob through Charlie.
  
  The challenge to transfer money from Alice to Bob through Charlie is that the transfer from Alice to Charlie and the one from Charlie to Bob are independent.
  Consequently, if one of these two transfers goes through, there is no guarantee 
  that the other one will go through.
  The Lightning network handles this issue by binding the two transfers using a Hashed TimeLock Contract (HTLC).
  Using HTLC, the two transfers on the way from Alice to Bob are conditioned on Bob releasing a secrete preimage.
  This essentially ensures that Bob's secret preimage and Alice's money are atomically exchanged.


 \section{The proposed reputation system}
  \label{sec:rep}  
  
\subsection{System components}  
\label{sec:sysComp}

  \textbf{Market.} 
  A market is identified by a number called the \emph{market ID}, and is composed of servers, which are entities that provide the service to clients for profit.
  The markets we consider are open, which means they allow any server to join and offer their service.
  Every server is identified by an ID, which is a public key. 
  The market has no central authority to assign IDs to servers. Therefore, similar to Bitcoin users, servers select their IDs on their own.
  As a result, an entity can enter the market with many different IDs, as it can create many public keys.

  \textbf{Reputation.} In our system, a server generates its own reputation and proves it using a hashcash.
    More specifically, given a server ID, a market ID, and a hashcash the reputation of a server
    is calculated as the number of leading zeros of 
    \[
    H(server ID || market ID || hashcash)
    \]
    in binary, where $H$ is a cryptographic hash function (such as SHA-256).
    This is illustrated in~Fig.~\ref{fig:rep}.
    Note that the hashcash is basically a nonce, similar to the nonce in the Bitcoin block headers.  
    Also, note that reputation is tightly linked to a single pair of server ID and market ID.
    This prevents an entity from using a reputation for multiple IDs or over multiple markets.
    
    A server can increase its reputation on its own by creating a better hashcash. 
    This is in contrast to the existing reputation systems, where a server's reputation is increased when it receives good reviews from clients/customers. 
    Finding a better hashcash, however, is not free. The server has to mine itself, or rent mining power (whichever the server finds more cost-effective). 
    A server may make such an investment to, for example, get a better share of the market or merely maintain its share in a competitive (but profitable) market. 
    We remark that anyone can join the market but may not necessarily profit from the market because of the competition that exists between servers to provide the service at the lowest possible fee.
    It is not our intention to
    guarantee
   that the market will have many servers in it. In fact, the market may become dominated by a small number of  ``powerful'' servers. 
    However, we aim at designing a market where every server has a
    strong incentive to fulfill its contracts, and lower its service fee in competing with other servers.

    \begin{definition}[Reputation cost]
    \label{def:repCost}
       The reputation cost, $\cost(r)$, is an estimate of the minimum energy (electricity) cost to generate a server ID with reputation~$r$. 
      The cost of a server~$s$ with reputation~$r$ is denoted $\cost(s)$,
      and is defined to be equal to $\cost(r)$.
      We remark that reputation cost is time variant, because energy cost and hardware efficiency change over time. 
    \end{definition}

    \begin{figure}
    \centering
        \includegraphics[width=0.45\textwidth]{./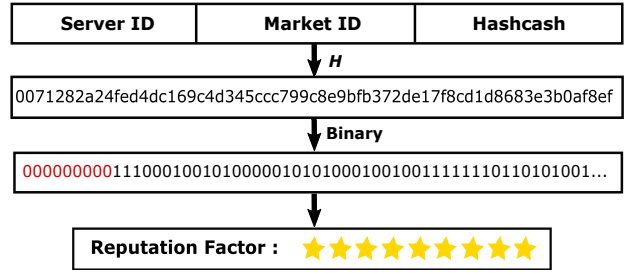}
        \caption{ An illustration of how a reputation is calculated.}
    \label{fig:rep}
    \end{figure}

    \textbf{Documents.} The system generates two types of digital documents: contracts, and proof-of-breaches. 
    A contract is a stand-alone digital document that includes a server ID, terms of the service, and the server's signature.
    A proof-of-breach, on the other hand, is a digital document that consists of a contract and publicly-verifiable evidence proving that the server who signed the contract has breached it.

    Each contract contains two hash images: 
    a \emph{client hash image} selected by the client and a \emph{server hash image} selected by the server.
    A contract is considered valid only if it is presented along with the preimage of the server's hash image.
    In contrast, a proof-of-breach becomes invalid if the preimage of the client's hash image is presented.
    As will be explained later, this mechanism allows a digital document to be atomically exchanged for cryptocurrency.   
    One can view these preimages as one-time use on/off switches:
    the server's preimage activates the contract, while the client's preimage terminates it.

    \textbf{Distributed storage system.} The reputation system utilizes a distributed storage system to store servers' records, including IDs, hashcash, and proof-of-breaches.
    The main requirement of this storage system is to ensure that each server's record is stored by \emph{at least one node}, who is willing to share the record with others.
    As stated in the next proposition, this is a condition that is naturally achievable in our system.
    Consequently, the proposed reputation system can rely on the set of servers/clients as part of a (perhaps larger) distributed storage system 
    that stores servers' records.    
    \begin{prop}
      For every ID, hashcash, and proof-of-breach, there is at least one node in the system that has incentive to store and share the record.
    \end{prop}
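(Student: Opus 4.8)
The plan is to argue by cases on the three record types, after first observing that storing and sharing a single record costs a node only a negligible amount of storage and bandwidth. Consequently, it suffices to exhibit, for each record type, a node whose expected benefit from having that record be publicly available is strictly positive; since any such benefit is weighed against a vanishing cost, even a small incentive is enough. The argument will therefore reduce to identifying, in each case, who gains from the record's visibility and tying that gain to the market mechanism described above.

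First I would treat the reputation records, namely a server's ID together with its hashcash. Here the natural custodian is the server itself. By the market design, a client selects a server with probability increasing in the server's reputation, and a server profits only when it is selected. Having already paid the reputation cost of Definition~\ref{def:repCost}, the server strictly prefers its ID and hashcash to be visible to prospective clients, since concealing them would forfeit the very market share that the investment was meant to secure. Hence the server has a strictly positive incentive to store and share its own reputation record.

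The remaining and more delicate case is the proof-of-breach. The party that benefits here need not be the wronged client, whose interest in the document may be transient --- indeed the document can be extinguished once the client reveals the preimage of its hash image --- but rather any \emph{competing} server in the same market. A proof-of-breach discredits the reputation of the breaching server, and because the selection probability is reputation-weighted across the available servers, discrediting one server redistributes expected market share to the others. Thus every competitor of the breaching server enjoys a strictly positive expected gain from the proof-of-breach being public, which, set against negligible storage cost, furnishes the incentive to store and share it. As long as the market contains at least one server other than the breacher, such a node exists.

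The main obstacle I anticipate is making the proof-of-breach argument airtight: unlike the reputation record, the beneficiary is a third party rather than the document's originator, so the argument must lean on the monotonicity of the selection rule in reputation (guaranteed by the market design) and on the fact that market share is conserved and merely redistributed among the surviving reputations. A secondary subtlety is the degenerate monopoly case, in which no competitor exists; I would handle this by noting that a prospective client also has incentive to retain proof-of-breaches in order to avoid contracting with a known defaulter, so a willing custodian exists even then.
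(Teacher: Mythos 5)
Your proposal is correct and matches the paper's own argument: the server stores its own ID/hashcash to protect its market investment, and a proof-of-breach is kept both by the wronged client and by competing servers who gain market share from discrediting the defaulter. The only difference is emphasis --- the paper treats the victim as the primary custodian of a proof-of-breach (the victim keeps it as settlement leverage, so its interest is not really transient) and competitors as the supplementary one, whereas you reverse that order; both observations appear in the paper's proof.
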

    \begin{proof}
        A server has full incentive to store and share its own ID and hashcash with others.   
        With regards to a proof-of-breach, there is at least one node with a strong incentive to store and distribute the record: 
        the victim of the contract breach.
        In addition, servers have incentives to store a proof-of-breach against the defaulting server. It is because the market share of the defaulting server becomes available to all others when the defaulting server is out of the market. Note that every server in this market essentially competes with every other server, because all servers offer their service to the same pool of clients. 
    \end{proof}


\subsection{Interactions}
   The proposed reputation system supports two types of inter-component interactions: client-server interactions and client-storage interactions.   
   Client-server interactions are to transfer reputation information including server ID, 
   hashcash and server preimages (i.e., preimages that can invalidate existing proof-of-breaches).
   They are also to communicate terms of services and fees and to transfer contracts.
   Client-storage interactions, on the other hand, are to store and retrieve proof-of-breaches from the distributed storage system.

\subsection{Protocol}
   Consider a market with a set of servers that provide a service for profit. 
   Each server has an ID and a hashcash to represent its reputation.
   In this market, a client who wishes to receive the service goes through the following steps.
   \begin{enumerate}
     \item \textbf{Screening}: The client collects reputation information (including server IDs and hashcash) from the servers.
       In addition, it retrieves proof-of-breaches from the distributed storage system and verifies them.
       Valid proof-of-breaches can be cached at the client side.
       The client discards servers with a valid proof-of-breach against them. 
       The servers that are not discarded are referred to as \emph{candidate servers}.
      \item \textbf{Negotiation}: The client negotiates the terms of service and fees with the candidate servers through private communications.
      Alternatively, instead of actively engaging with each client, servers may offer fixed service plans that are ready to be signed.

      \item \textbf{Selection}: Considering the servers' reputations and their service fees, 
        a client selects a subset of the candidate servers to contract with.
        If the client selects more than one server, it will contract with each candidate server separately.
        We assume that the client receives no damage if at least one of the 
        selected servers fulfills the terms of the contract.
        For example, in a watchtower market, if at least one watchtower respects its  contract and monitors the blockchain, the payment channel is fully protected. 
        If all the selected servers deny the fulfillment of the contract, however, the client can create a proof-of-breach against every single selected server.

        In this work, we do not impose any specific method for selecting the candidate servers.
        In fact, it may not be possible to enforce  a fixed selection method, as clients may have other reasons (external to the system) to select a particular server.
        Nevertheless, we suggest the following  properties to be considered in designing any selection method.
        \begin{enumerate}
            \item \textbf{Reputation-aware}: if the algorithm selects a server with reputation $r$ and service fee $f$, then it must also select any server with reputation $r'>r$ and service fee $f'\leq f$;
            \item \textbf{Fee-aware}: if the algorithm selects a server with reputation $r$ and service fee $f$, then it must also select any server with service fee $f'<f$, and reputation $r'\geq r$.
            \item \textbf{Damage-aware}: for every selected server $s$, the contract's value 
            for the client, $\val(c)$,  must be less than $\frac{\cost(s)}{k}$, 
            where $k\geq 1$ is a security parameter defined in Section~\ref{sec:Adv}. 
        \end{enumerate}
        The first two properties stimulate competition, 
        and encourage servers to increase their reputation and reduce their service fees. Note that both reputation and service fee are considered by a client in selecting servers. 
        The third property encourages servers to behave, and is a defence mechanism against bribery, as will be explained in Section~\ref{sec:secAnal}.         
        
      \item \textbf{Purchase}:  To purchase a contract, the client first obtains a signed copy of the contract from the server and verifies the contract.
      Then, the client purchases the contract by, for example, atomically exchanging the server's preimage for cryptocurrency. 
      Recall that a signed contract is considered valid only when it is  presented with this preimage.
      \color{black}

      \item \textbf{Punishment}: If the client ever discovers a breach in the contract, 
        it creates a proof-of-breach and stores it in the distributed storage system.
               Optionally, the client can negotiate terms of settlements with the defaulting server. 
        The contract supports the atomic exchange of the client's preimage (which invalidates the proof-of-breach) for cryptocurrency.
        In our model, a proof-of-breach against a server will reduce the server's reputation to zero, unless the server provides the corresponding client's preimage indicating that the contract has been terminated as a result of, for example, a settlement. 
   \end{enumerate}

   The main challenge in enabling the proposed reputation system is to enable publicly verifiable proof-of breaches.
   We believe that many markets that provide blockchain services can be (re)designed to provide this feature.
   In this work, we present one such service: blockchain monitoring.

\section{Adversarial model}
\label{sec:Adv}
  An adversary can join the market with an arbitrary number of IDs.
  Moreover, a client cannot determine if a set of IDs belong to the same entity. 
  We assume that there are at least two independent service providers (servers) in the market\footnote{
  In an open market, if the market is profitable for a single provider, it makes sense for another service provider to join the market.}.

  An adversary can launch a denial of service attack by populating a storing node with server IDs and/or proof-of-breaches.
  It may also bribe a storing node to delete its proof-of-breach from its storage. 
  An adversary can bribe a server to breach a contract.
  However, we assume that at each point in time, a server has standing bribes on at most $k$ different contracts, where $k\geq 1$ is a system security parameter.
  In addition, we assume that the total amount of bribe offered to a server to breach a contract $c$ is not more than the value of $c$ for its client
  (otherwise, the bribe can be used to buy off the client directly!).
  Moreover, we assume that a server $s$ does not accept 
  any of its standing bribes, if the total amount of bribe offered to $s$ (over all the contracts $s$ is handling) is
  less than $\cost(s)$, as defined in Definition~\ref{def:repCost}.

  For a single contract, a client may select and pay multiple servers.
  We assume that the client does not receive any damage if at least one of these servers fulfills the terms of the contract.
  For example, in the case of the watchtower market, the client receives no damage if at least one of the paid watchtowers monitors the blockchain and follows the terms of the contract (e.g., submit the justice transaction in case of cheating).
  In selecting servers, we assume that a client has a good estimate of the reputation cost function.
  Finally, we assume that the client can create a proof-of-breach against all paid servers if there is a term in the contract that
  is not fulfilled by any of the paid servers.

\section{Security Analysis}
\label{sec:secAnal}
  \textbf{Sybil attack.} In the Sybil attack, an adversary attempts to subvert the reputation system by creating multiple IDs.
  In our reputation system, an adversary cannot impact the reputation of servers by merely creating many IDs.
  It is because, unlike other reputation systems such as recommendation systems,  in our system, the reputation of a server is not effected by other nodes.
  In fact, there are only two things that can impact one's reputation: proof-of-work and proof-of-breach.

  \textbf{DoS attack.} An adversary may create multiple IDs and/or fake contracts to use up the storage of storing nodes.
  For instance, suppose that there are nodes in the system that provide clients with server IDs, hashcash, and their IP addresses. 
  An adversary may try to overwhelm these nodes by flooding them with server IDs.
  At some point, a node does not have enough storage room to accept new IDs or has to replace old IDs with new ones coming.

  A simple counter-measure against this type of denial-of-service attacks is to use servers' reputations to prioritize records.   
  For example, a proof-of-breach against a highly-reputable server has a higher priority than a proof-of-breach against a normal server.
  With such prioritising, a storing node accepts and stores a new record if either the node has enough room or the priority of the new record is higher than the priority of the lowest-priority record in the storage.
  In the latter case, the lowest-priority record is replaced with the new record.
  
  By the following proposition and considering today's computational power, storage capacity, and electricity cost it is impractical for an adversary to flush out the records of all honest servers from a storage node.

  \begin{prop}
    An adversary requires on average $M\cdot 2^{r_{max}}$ hash computations to flush out the records of all honest servers from a storage node,
    where $M$ is the number of records the node can store, and $r_{max}$ denotes the maximum reputation of any honest server.
  \end{prop}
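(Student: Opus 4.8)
The plan is to reduce the flushing attack to a proof-of-work counting problem. First I would characterise what ``flushing out all honest records'' forces on the final storage contents. Since the node ranks records by reputation and, when full, evicts the lowest-priority one, the honest server attaining reputation $r_{max}$ can have its record removed only by a record of reputation at least $r_{max}$; more generally, for \emph{no} honest record to remain, all $M$ slots must end up occupied by adversary records of reputation $\geq r_{max}$. This reduces the adversary's task to producing $M$ records, on $M$ distinct IDs, each of reputation at least $r_{max}$. (The single insertion that finally ejects the top honest record must strictly exceed $r_{max}$; I would either fix a tie-breaking convention or note that this contributes only a lower-order $2^{r_{max}}$ term to the average, leaving the stated bound intact.)

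Next I would cost a single such record through the hashcash property. Modelling $H$ as a random oracle, one evaluation of $H(\text{ID}\,||\,\text{marketID}\,||\,\text{nonce})$ yields at least $r_{max}$ leading zeros with probability $2^{-r_{max}}$, independently across evaluations, and each evaluation tests exactly one $(\text{ID},\text{nonce})$ pair. Crucially, because the ID is hashed together with the nonce, work performed for one ID gives nothing usable for another; this is precisely the ``reputation is tightly linked to a single ID'' property asserted earlier, and it is what prevents the adversary from amortising effort across its $M$ records.

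I would then finish with a negative-binomial expectation. The adversary must collect $M$ successful evaluations (on distinct IDs), each evaluation being an independent Bernoulli trial of success probability $2^{-r_{max}}$. The number of evaluations needed to obtain $M$ successes is a sum of $M$ independent geometric variables, each with mean $2^{r_{max}}$, for a total expectation of $M\cdot 2^{r_{max}}$. I would stress that this expectation bounds \emph{every} allocation strategy: no matter how the adversary spreads its evaluations over candidate IDs, it still needs $M$ successes each arriving with probability $2^{-r_{max}}$ per trial, so the count is strategy-independent.

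The step I expect to be the main obstacle is the reduction in the first paragraph, not the counting in the last. I must argue rigorously that partial flushing is useless --- that the adversary genuinely needs all $M$ slots filled with reputation-$\geq r_{max}$ records rather than some cheaper mixed configuration --- and handle the strict-versus-non-strict priority comparison cleanly so that the clean average $M\cdot 2^{r_{max}}$ survives. Once the ``no work-sharing across IDs'' observation (guaranteed by binding reputation to the ID) is in place, the remaining counting via the negative-binomial mean is routine.
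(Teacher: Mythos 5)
Your proposal is correct and follows essentially the same route as the paper's proof: reduce the attack to producing $M$ records each with reputation at least $r_{max}$, charge $2^{r_{max}}$ expected hash evaluations per record, and sum. The paper states these two steps in two sentences without the random-oracle/negative-binomial justification or the tie-breaking caveat you raise; your added detail is sound but does not change the argument.
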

  \begin{proof}
    For an adversary to flush out the records of all honest servers from the storage, it needs to create $M$ records, each with reputation of at least $r_{max}$.
    To create a reputation of at least $r_{max}$, the adversary needs to compute on average $2^{r_{max}}$ hashes.
    Therefore, in total, the adversary needs to compute at least $M\cdot 2^{r_{max}}$ hashes, on overage.
  \end{proof}

  For example, assuming that $r_{max}>36$, and\footnote{Even a personal computer, which is highly inefficient in mining, can generate an ID with reputation of at least $36$ in few hours.} the storage node can store $2^{40}$ records, the adversary must compute at least $2^{76}$ hashes in order to remove/replace all the honest servers' records. Using an ASIC hardware (with the speed of 10 tera hashes per second), this takes about 170 years!

  \begin{prop}
    A defaulting server has no incentive to flush a proof-of-breach against itself out of a storing node by generating a new ID and many artificially-generated proof-of-breaches against the new ID. 
  \end{prop}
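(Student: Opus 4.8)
The plan is to show that the attack is strictly dominated by the trivial alternative of abandoning the tarnished ID and mining a fresh one, so that a profit-maximizing server never mounts it. I would first pin down the cost of the attack. Suppose the standing proof-of-breach targets the server's old ID, whose proof-of-work reputation is $r$, so its priority in the storage node is $r$. To evict this record, the attacker must insert records whose priority strictly exceeds $r$. Since the priority of a proof-of-breach equals the reputation of the ID it is filed against, every flooding record must be filed against a new ID whose reputation $r'$ satisfies $r' > r$. By Definition~\ref{def:repCost}, generating such an ID costs at least $\cost(r') > \cost(r)$, because each additional leading zero doubles the expected hash work; this single mining cost dominates the attack, as producing additional proof-of-breaches against the same new ID is essentially free (only signing is required).

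The crux of the argument is the observation that the very records used for flooding are themselves proof-of-breaches against the new ID, and hence simultaneously destroy its value. Once these records are stored and served, any client executing the \textbf{Screening} step retrieves and verifies them, finds a valid proof-of-breach against the new ID, and discards it; any attempt to shield the new ID by the same trick requires a yet-higher-reputation ID, so the cost only grows. Thus the attacker spends at least $\cost(r') > \cost(r)$ to obtain an ID that is already worthless. Moreover, by the earlier proposition (every proof-of-breach is stored by at least one incentivized node---namely the victim), the proof-of-breach against the old ID survives on another node, so evicting it from a single node does not restore the old ID either. I would then contrast this with the honest alternative: a server may simply walk away from the tarnished old ID and mine a clean ID of reputation $r$ at cost $\cost(r)$, which is cheaper than the attack and yields a fully usable ID. The attack is therefore strictly dominated in both cost and outcome, and the server has no incentive to perform it.

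The main obstacle I anticipate is justifying that the attacker cannot obtain high-priority flooding records without paying for (and tarnishing) a genuinely high-reputation ID---in particular, ruling out ``borrowing'' the high priority of some honest server's reputable ID. Here I would argue that a valid proof-of-breach requires a contract signed by the targeted server together with publicly verifiable breach evidence, which the attacker can manufacture only for IDs whose signing key it controls; an honest high-reputation server will not sign and breach such a contract, so no external high priority is available. This confines the flooding records to proof-of-breaches against the attacker's own new IDs, each of which carries the full cost $\cost(r')>\cost(r)$ and is destroyed in the process, which closes the argument.
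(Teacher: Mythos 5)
Your proof is correct and follows essentially the same route as the paper's: the storage node's priority rule forces the flooding proof-of-breaches to target a freshly mined ID with reputation at least that of the tarnished one, at which point the server would do at least as well by simply starting over with that new ID. Your additional observations---that the flooding records themselves burn the new ID, that the victim retains a copy of the original proof-of-breach, and that the attacker cannot borrow an honest server's high-reputation ID for the flood---are sound elaborations that the paper leaves implicit, but they do not change the structure of the argument.
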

  \begin{proof}
    Suppose that the new artificially generated proof-of-breaches pushes out the valid proof-of-breach out of a storage node.
    Since the storage node prioritizes records according to the reputation of the corresponding server ID, the reputation of the new ID must be at least equal to the reputation of the defaulting server.
    There is no incentive then for the defaulting server to try to push out the proof-of-breach rather than starting fresh with the new ID.
  \end{proof}

  A storage node needs
  to verify records such as proof-of-breaches.
  Such verification needs little but non-negligible resources such as computation and memory.
  An attacker can overwhelm a storage node by sending many fake proof-of-breaches to the node.
  A storage node can mitigate this type of DoS attacks by simply requiring a small proof-of-work along with a submitted record.  
  This mitigation is, in fact, the original application of the hashcash, which is to  mitigate such DoS attacks.

\textbf{Fake resolved proof-of-breaches.} 
A server may create valid proof-of-breaches against itself, and then resolve them to create a market image of settling all disputes with clients.
This method, however, does not impact the server's reputation because terminated/resolved contracts neither increase nor decrease the server's reputation.

  \textbf{Bribery.} A defaulting server may attempt to bribe the storing nodes to delete a proof-of-breach.
  Bribing, however, cannot guarantee that a digital document is purged.
  In fact, the victim of a contract breach would always keep its proof-of-breach and can re-distribute it at any time. 
  The main hope of a defaulting server who is willing to remain in the market is to settle with the client (by purchasing the client's preimage, which invalidates the contract) or start over with a new reputation.
  
  Another type of bribery is when an adversary offers a bribe to a server to breach a contract.
  For instance, a payment channel party may bribe a watchtower to stop monitoring the channel for its counter-party.
  This type of bribery is a serious threat against any digital market.
  Note that since a server provides services to many clients, it may receive multiple bribes from multiple adversaries. 
  Nevertheless, Proposition~\ref{prp:brife-safe} shows that every contract in the market is ``bribe-safe'', as defined below.
  \begin{definition}
    (Bribe-safe) A contract is bribe-safe if the cost of every server responsible to execute the contract is more than the total amount of bribes offered to the server. 
  \end{definition}
    Note that the condition in the above definition is somewhat strong, because a contract is still safe even when a single paid server (as opposed to all) fulfills its contractual obligation.

  \begin{prop}
  \label{prp:brife-safe}
    Under the assumptions described in the adversarial model, every contract in the proposed system is bribe-safe if the selection methods used by the clients are \emph{damage-aware}.
  \end{prop}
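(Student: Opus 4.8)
The plan is to fix an arbitrary contract together with one of the servers responsible for executing it, and then bound the total bribe that this server can be offered by summing the bribes over all contracts on which it currently carries a standing bribe. Since bribe-safety is defined purely as the inequality $\cost(s) > (\text{total bribe offered to } s)$, it suffices to establish this single inequality for every server $s$ responsible for the contract.

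First I would let $s$ be any such responsible server and observe that, by Definition~\ref{def:repCost}, $\cost(s)$ depends only on $s$'s reputation and is therefore a fixed constant independent of which contract $s$ happens to be handling. Next I would invoke the adversarial-model assumption that at any instant $s$ has standing bribes on at most $k$ contracts; call them $c_1,\dots,c_m$ with $m\le k$. For each $c_j$, the adversary offering the bribe must target a server responsible for $c_j$, so $s$ is itself a selected server for $c_j$. The \emph{damage-aware} property of the client's selection method then applies to $s$ and yields $\val(c_j) < \cost(s)/k$.

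The crux is then a short summation. By the adversarial-model assumption that the bribe offered on a single contract never exceeds that contract's value, the bribe on $c_j$ is at most $\val(c_j)$, hence strictly less than $\cost(s)/k$. Summing over the at-most-$k$ standing bribes gives a total strictly below $k\cdot\cost(s)/k = \cost(s)$, which is exactly the bribe-safe condition. I would close by noting that this inequality, combined with the assumption that $s$ rejects all its standing bribes whenever the total offered is less than $\cost(s)$, confirms that $s$ has no incentive to defect.

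The step I expect to be the main obstacle is the bookkeeping in the previous paragraph: I must ensure that the damage-aware bound $\val(c_j) < \cost(s)/k$ is stated with respect to the \emph{same} server $s$ across all $k$ contracts, rather than a different server per contract. This is legitimate precisely because $\cost(s)$ is a per-server (reputation-based) quantity and because a bribe aimed at breaching $c_j$ is necessarily aimed at a server responsible for $c_j$, so the single server $s$ appears as a selected server in each of the $c_j$ and the same constant $\cost(s)$ can be factored out of the sum. Making this alignment explicit is what turns the per-contract damage-aware guarantee into the per-server bribe-safety guarantee.
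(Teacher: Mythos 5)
Your proposal is correct and follows essentially the same argument as the paper: bound each standing bribe by $\val(c_j)$, apply the damage-aware inequality $\val(c_j)<\cost(s)/k$, and sum over at most $k$ contracts to get a total strictly below $\cost(s)$, so the server rejects all bribes. The alignment issue you flag (using the same server $s$ and hence the same $\cost(s)$ across all contracts) is handled implicitly in the paper by fixing $s$ at the outset and summing over its own set $\mathcal{C}$ of bribed contracts.
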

  \begin{proof}
    Let $s$ be any server in the market, and $\mathcal{C}$ be the set of contracts for which the server has a standing bribe offer.
    Let $\brb(c)$ denote the total amount of bribe offered to breach a contract $c$, $\val(c)$ denote the maximum value of $c$ for the client, and $\cost(s)$ denote the cost of server $s$ as defined 
    in Definition~\ref{def:repCost}.
    By the selection method's third property, we have
    \[
      \forall c\in\mathcal{C}:\quad \val(c)< \frac{\cost(s)}{k}.
    \]    
    In addition, by our adversarial model assumptions, we have
    \[
      \forall c\in\mathcal{C}:\quad \brb(c)<\val(c),
    \]       
    and
    \[
      |\mathcal{C}|\leq k,
    \]
    where $|\mathcal{C}|$ denotes the cardinality of $\mathcal{C}$.
    Therefore, we get
    \[
    \begin{split}
      B
      &\leq \sum_{c\in\mathcal{C}} \brb(c)
      < \sum_{c\in\mathcal{C}} \val(c)\\
      &\leq \sum_{c\in\mathcal{C}} \frac{\cost(s)}{k}
      =\frac{|\mathcal{C}|}{k}\cost(s)
      \leq \cost(s),
    \end{split}
    \]
    where $B$ is the total amount of bribe offered to $s$.
    Thus, the total amount of bribe offered to $s$ is less than
    $\cost(s)$.
    Therefore, by our adversarial model's assumption, the server does not accept a bribe to breach any contract $c\in\mathcal{C}$.
  \end{proof}


\section{A reputation-based market of watchtowers}
%

    Our reputation system can be used in various digital markets which offer blockchain services.
    In this section, we show one example where we apply our system in a market of watchtowers that provides blockchain monitoring service to payment channel holders who decide to go offline.

    Consider a market of watchtower servers, where each server has an ID (public key), and a reputation as defined in Section~\ref{sec:sysComp}.
    The market utilizes a distributed storage system, which stores the server's records, including ID, hashcash, IP address, proof-of-breaches, and preimages.
    Recall that servers and victims of contract breaches participate in this distributed storage system, and for each record, there is at least one node that has a strong incentive to store the record and distribute it.
    
    \textbf{Screening.} Consider a payment channel between Alice and Bob and suppose that
    Alice is interested to pay one or more watchtowers to monitor the blockchain on her behalf while she is offline.
    First, in a screening process, Alice contacts the storage system, collects all servers' records, and evaluates servers.
    A server's evaluation includes a single computation of the hash function to calculate the server's reputation, and verification of proof-of-breaches against the server if there is any.
    Verification of a proof-of-breach is expected to be harder than a single computation of the hash function.
    However, a client needs to verify a proof-of-breach at most once, as it can cache the result for later use.
    The watchtowers that successfully pass Alice's evaluation are referred to as candidate watchtowers.  
    
    \textbf{Negotiation.} In this step, Alice communicates the terms of a contract
    with the candidate servers, and negotiates for service fees.
    At the end of this step, Alice knows how much fee each 
    candidate watchtower charges for the contract.
    Fig.~\ref{fig:contract} shows a watchtower contract sample, which
    consists of a market ID, server ID, server's hashcash, a set of 16-byte transaction ID prefixes,
    a set of encoded justice transactions, the range of
    blocks that the watchtower has to monitor,
    server's hash image, client's hash image, and server's signature.
    The client's and server's hash images are hashes of random numbers generated by, respectively, the client and the server.
    Note that contract's data does not reveal any information about the client.
    
\begin{figure}[t]
    \centering
    \includegraphics[width=0.48\textwidth]{./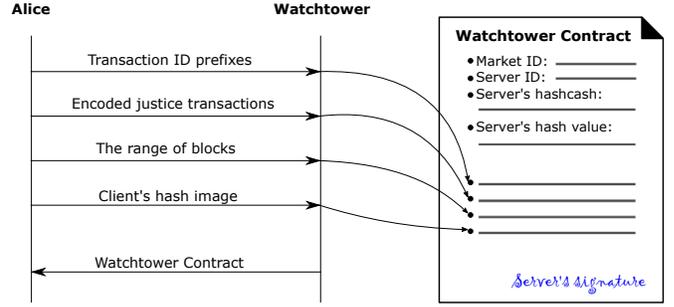}
    \caption{A sample of watchtower contract.}
    \label{fig:contract}
\end{figure}

    \textbf{Selection.} Considering the reputations of the candidate watchtowers and their service fees, 
    Alice selects a set of watchtowers to contract with. 
    As mentioned earlier, we do not enforce any specific selection algorithm.
    We, however, suggest any selection method to be 1) reputation-aware;
    2) fee-aware, and 3) damage-aware.
    Algorithm~\ref{alg:sel} shows one possible selection method.
    In this method, Alice first determines the contract's maximum value,
    $\val(c)$.
    This value should be set to at least the payment channel fund,
    which is the maximum amount Alice would lose on the channel if Bob cheats. 
    Then, among candidate watchtowers whose reputation cost is more than the threshold $T=k\cdot\val(c)$, the algorithm chooses the one with the minimum fee.
    If there are multiple such watchtowers with the minimum fee, one with the maximum reputation is selected. 
    Thus, if the algorithm selects a server $s$ with reputation $r$, and service fee $f$, then for every server $s'$ we have
    \[
      f'\geq f  \quad \text{OR} \quad r'\leq r,
    \]
    where $f'$ and $r'$ denote the service fee and reputation of $s'$, respectively.
    This implies that the algorithm is both 
    reputation-aware and fee-aware.
    In addition, the algorithm is damage-aware as the reputation cost of the selected server is at least
    $k$ times the contract's value.
    
    When selection is complete, Alice contacts 
    the selected servers and provides them with her hash images\footnote{Different hash images are used for different servers.}. 
    In response, every server returns a signed contract.
    Note that a signed contract is only considered valid when it is presented with the server's preimage.
    To finalize the contract, Alice pays each server to receive the server's preimage.

    \begin{algorithm}
    \SetAlgoLined
    \KwData{A threshold $T$, and a set of servers $S$}
    \KwResult{$s^\dag\in S$}
     initialization\;
     $s^\dag \gets $ any member of $S$ with reputation cost of at least $T$\;
    \For{ every $s\in S$} {
      \If{$(\cost(s)\geq T) \,\&\,(\fee(s)\leq \fee(s^\dag)$} {
        \If{$(\cost(s)> \cost(s^\dag)\,||\,(\fee(s)< \fee(s^\dag)$} {      
          $s^\dag \gets s$\;
        }
      }
     }
     \caption{A watchtower selection method.}
     \label{alg:sel}
    \end{algorithm}

  \textbf{Purchase.}  To purchase the server's preimage, Alice can use the Lightning network, 
    or a payment channel with the watchtower.
    This may seem paradoxical, as the payment channel that Alice uses to pay the watchtower has to be protected by a watchtower, too.    
    To get around this issue, Alice uses a directional payment channel to pay the watchtower.
    As explained in Section~\ref{sec:watchtower}, Alice does not need to monitor the blockchain for a directional payment channel,
    i.e., a channel that she uses only to make payments.
    
    To purchase the contract using a directional payment channel, 
    Alice uses the watchtower's hash image in her HTLC.
    This, as discussed earlier, ensures that the contract and the service fee are atomically exchanged.
    As soon as the payment goes through (i.e., once Alice receives the watchtower's preimage), the contract becomes valid, and Alice can go offline.

    \textbf{Proof-of-breach.} 
    Since the Bitcoin blockchain is public, anyone can check if, for example, the $ctx$ transaction has been published. Therefore, anyone can verify whether a given contract has been breached.
    In particular, a contract together with the server's preimage
    serve as a proof-of-breach.
    Given a contract and a preimage, one can check whether 
    \begin{itemize}
      \item the format of the proof-of-breach is valid;
      \item the contract's signature is verified using the server's ID;
      \item the hash image of the given preimage is equal to the server's hash image in the contract;
      \item $ctx$ was published in one of the blocks that the watchtower was obliged to monitor;
      \item $jtx$ is a valid transaction (e.g., it can spend from $ctx$);
      \item $jtx$ was not published within the dispute period.
    \end{itemize}
    The proof-of-breach is valid if and only if all the above conditions hold. 
    Supplementary data such as the Merkle proof of the existence of $ctx$   can be appended to a proof-of-breach to assist light clients (e.g., app-based mobile clients) with the verification process\footnote{To show that $jtx$ does not exist, the Merkle proof of the existence of a transaction (other than $jtx$) that spends from $ctx$ can be provided. This proves that $jtx$ does not exist because at most one transaction can spend the output of $ctx$.}.
    This reduces the verification process to a few signature verification (e.g., checking the server's signature), a few hash computations (e.g., to verify a Merkle proof), and some simple format checking (e.g., checking the format of the contract).


    \textbf{Settlement.}
      The victim of a contract breach may settle with the defaulting server.
      This can be done readily by atomically exchanging the client's preimage for cryptocurrency.

   \section{Related work}
\subsection{Reputation systems}\label{reputation systems}
    Different reputation systems use different methods to calculate reputation. 
    We classify these methods into fact-based, and review-based methods. 
    Fact-based methods calculate the reputation of a party by solely taking the activities of the party as input.
    Review-based methods, however, calculate the reputation of a party using reviews the party receives from other parties.

 \textbf{Fact-based methods.} These methods evaluate the performance of a party using a predefined function that takes the party's activities as input~\cite{huang2019repchain,li2018crowdbc,incentivizingblockchainminers,ERC}, \cite{areputationmanagement,Repucoin,zhuang2019proof}. Therefore, in fact-based methods, the reputation of a party is only a function of its activities. For example, in~\cite{huang2019repchain} the reputation of a party is increased or decreased based on the value of transactions on which the party was honest or dishonest, respectively. 
 Another example is CrowdBC~\cite{li2018crowdbc}, a reputation system with two types of parties: requester and worker. A requester is a party that offers a task, while a worker is a party that performs tasks to improve its reputation. In CrowdBC, a requester and a worker negotiate and generate a smart contract. This contract has an evaluation-function that evaluates the performance of the worker. CrowdBC uses the output of this evaluation-function as well as the average reputation of all other workers to calculate the reputation of the worker.


  \textbf{Review-based methods.}
 In this class, reputation of a party depends on the reviews and scores it receives from other parties~\cite{repontheblock,schaub2016trustless,sharples2016blockchain}. 
 For example, in online shopping stores such as eBay and Amazon, buyers can review and rate sellers and items. These rates together represent the reputation of sellers and items. Another example is Kudos~\cite{sharples2016blockchain}, which is an educational reputation currency in a blockchain that records intellectual efforts, and related reputation rewards. Academic people and institutions that award certificates or verify innovations are parties of Kudos. When a person completes a certificate, their institution sends them some amount of Kudos based on their review. In this system, the amount of Kudos a party has represents the reputation of the party.

\subsection{Proof of work}\label{pow}
Dwork and Naor~\cite{dwork1992pricing} introduced the concept of proof-of-work in 1992.
Motivated to combat junk emails, they proposed to require users to compute a moderately hard function of their messages and some additional information in order to send their messages.
They called these functions \emph{pricing functions}, and introduced several of them in their work.
In 1997, Back~\cite{back1997partial} proposed Hashcash proof-of-work system to deter email spams, and denial-of-service attacks. 
The Hashcah system is used today as part of consensus protocols in many blockchains, including Bitcoin~\cite{nakamoto2008Bitcoin}.

In addition to combating denial-of-service attacks (e.g.,\cite{mankins2001mitigating}), proof-of-work has been used to mitigate Sybil attacks~\cite{back1997partial,baza2020detecting,borisov2006computational,dwork1992pricing,rowaihy2007limiting}, protect peer-to-peer resource sharing~\cite{vishnumurthy2003karma}, and reward well-behaving users~\cite{biryukov2015proof}.
For instance, in~\cite{biryukov2015proof}, 
the authors proposed a micropayment method to reward Tor relay operators. 
In the Tor network, selfish clients may utilize the shared bandwidth of Tor relays without contributing any resources to the system in return. To mitigate such selfishness, Tor clients must submit proof-of-work shares, which Tor relays can resubmit to a cryptocurrency mining pool instead of paying cash directly. By analyzing the cryptocurrencies market prices, the authors showed that their method can compensate for a significant part of the Tor relay operator's expenses.

\subsection{Payment channels and watchtower}
Payment channels were first introduced by Satoshi Nakamoto~\cite{wikipayment}.
These channels first emerged as unidirectional for one-way payments~\cite{spilman}, then transitioned into bidirectional channels to support two-way payments.
The two common bidirectional payment channels are the Lightning Network~\cite{poon2016Bitcoin} and the Raiden Network~\cite{raiden} which operate on Bitcoin and Ethereum blockchain~\cite{wood2014ethereum}, respectively. There are several implementations of the Lightning network, including C-lightning~\cite{c-lightning}, Eclair~\cite{Eclair}, and LND~\cite{LND}.

To secure payment channels, users must frequently be online and watch the blockchain to protect their funds. It is because one party may publish an old commitment transaction while the other party is offline.
Dryja~\cite{unlinkdryja} suggested that users who decide to go offline for an extended period of time delegate the task of watching the blockchain to third parties.
Hertig~\cite{watchtowerLN} called these third parties \emph{watchtowers}.
Designing a secure, efficient, and decentralized watchtower protocol is a challenging task.

McCorry \textit{et al.} proposed Pisa~\cite{pisa}, a protocol that employs third parties called custodian to protect Sprites channels~\cite{sprites}.
In Pisa, users pay their custodians every time they make a transaction on the channel.
On the other hand, 
custodians lock a collateral fund, which they lose if they misbehave.
Avarikioti \textit{et al.} proposed the DCWC protocol~\cite{avarikioti2018towards}, in which
full nodes can act as a watchtower for multiple channels. Unlike Pisa, in DCWC, a watchtower gets paid only when it catches a fraud. 
In another research work, Avarikioti \textit{et al.} presented BRICK~\cite{Brick}, a protocol that detects and prevents fraud before it appears on the blockchain. 
To this end, BRICK employs a committee of third parties called Wardens. 
Wardens confirm the validity of each state channel and make sure that only the correct state is published in the blockchain when a dispute occurs.
The authors in~\cite{khabbazian2019outpost} proposed Outpost, a lightweight structure for watchtower that encodes justice transactions within commitment transactions rather than storing them in the watchtower. This construction saves an order of magnitude in storage over existing watchtower designs.
Finally, Avarikioti \textit{et al.} extended the Lightning network and introduced Cerberus Channels~\cite{Cerberus}. They motivated watchtowers to work honestly by rewarding them for any update on Cerberus Channels and forcing them to pay a penalty for any fraud.

\section{Conclusion}
  In this paper, we proposed a proof-of-work based reputation system to incentivize well-behaviour and stimulate competition in online marketplaces.
  An advantage of our system is that it does not rely on any blockchain or smart contracts to, for example, punish misbehaviour. Instead, it stores a proof of misbehaviour as a record in a distributed storage, which is only required to store each record in at least one node.
  This is an easy requirement to achieve since for each record, there is at least one party who is strongly motivated to store and (re)distribute it. Finally, to showcase our reputation system, we designed an open market of watchtowers. Our reputation system motivates watchtowers not only to behave according to their obligation but also compete with each other by progressively improving their reputation and by reducing their service fees.
  
  \textbf{Future research.}
  Our proposed system is flexible, and can be customized to achieve a specific need. For instance, the system does not impose any specific selection method.
  One can, therefore, design a customized selection method to achieve a certain objective in the market. 
  
  Another possible future research is to relax the system reliance on the security parameter $k$.
  One approach is to ask clients to publish a digest of their ongoing committed contracts.
  This way, a new client can, for example, avoid a server whose total commitment value is greater than its reputation cost.
  Fortunately, clients have 
  incentive to ask servers to sign such digests. They also have  
  incentive to provide these digests to other servers. 
  It is because a client does not want its server to over-commit as it increases the risk of a successful bribery.
  Also, a server has an
  incentive to record the digests of other servers, as these digests can show a potential customer that other servers have already committed to many contracts (another means to attract a customer).


\pagebreak

\bibliographystyle{IEEEtran}

\bibliography{IEEEabrv, Citations}
\end{document}